\documentclass[preprint,12pt]{article}

\usepackage{graphicx}

\usepackage{amssymb}
\usepackage{amsthm}
\usepackage{amsmath}

\begin{document}




\title{An extension of the Georgiou-Smith example: Boundedness and attractivity  in the presence of unmodelled dynamics via nonlinear PI control}


\author{Haris E. Psillakis\\
National Technical University of Athens (NTUA)\\
H. Polytechniou 9,
15780 Zografou, Athens, Greece\\
\texttt{hpsilakis@central.ntua.gr}}
\date{}
\maketitle
\begin{abstract}
In this paper, a nonlinear extension of the Georgiou-Smith system is considered and robustness results are proved for a class of nonlinear PI controllers with respect to fast parasitic first-order dynamics. More specifically, for a perturbed nonlinear system with sector bounded nonlinearity and unknown control direction, sufficient conditions for global boundedness and attractivity have been derived. It is shown that  the closed loop system is globally bounded and attractive if (i) the unmodelled dynamics are sufficiently fast and (ii) the PI  control gain has the Nussbaum function property. For the case of nominally unstable systems, the Nussbaum  property of the control gain appears to be crucial. A simulation study confirms the theoretical results.
\end{abstract}

\section{Introduction}
\label{intro}
\newtheorem{theorem}{Theorem}
\newtheorem{lemma}{Lemma}
\newtheorem{remark}{Remark}
\newtheorem{definition}{Definition}
\newtheorem{assumption}{Assumption}
\newtheorem{corollary}{Corollary}
The unknown control direction problem has attracted significant research interest over the last three decades. Nussbaum gains \cite{Nussbaum_paper}, \cite{Ye_Jiang98} originally introduced in \cite{Nussbaum_paper} have become the main theoretical tool for controller design for systems with unknown control directions.
Nussbaum functions (NFs) are continuous functions $N(\cdot)$ with the property
\begin{align}
    \limsup_{\zeta\rightarrow +\infty}&\frac{1}{\zeta}\int_{0}^{\zeta}{N(s)ds}=+\infty\label{nussbaum propertyp}\\
    \liminf_{\zeta\rightarrow +\infty}&\frac{1}{\zeta}\int_{0}^{\zeta}{N(s)ds}=-\infty\label{nussbaum propertym}.
\end{align}
Examples of NFs are $\zeta^2\cos(\zeta)$, $\exp(\zeta^2)\sin(\zeta)$ among many others.

For the simple integrator case $\dot{y}=bu$ with $b$ a nonzero constant of \emph{unknown sign}, standard analysis \cite{Nussbaum_paper} shows that the Nussbaum control law
\begin{align}\label{nussbaum controller}
 u&=\zeta^2\cos(\zeta)y\nonumber\\
 \dot{\zeta}&=y^2
\end{align}
ensures convergence of the output $y$ to the origin and boundedness of the Nussbaum parameter $\zeta$. However, Georgiou and Smith demonstrated in \cite{GS} that the proposed controller is nonrobust to fast parasitic unmodelled dynamics. Particularly, they considered the  system
\begin{align}\label{G-S}
    \dot{x}&=bu\nonumber\\
    \dot{y}&=M(x-y)
\end{align}
and showed divergence for $M>1$ when the controller \eqref{nussbaum controller} is used.

An alternative nonlinear PI methodology was proposed by Ortega, Astolfi and Barabanov  in \cite{Ortega_paper} to address the unknown control direction problem. For the simple integrator case, their controller takes the form
\begin{equation}\label{nonlinear PI}
    u=z\cos(z)y
\end{equation}
with $z$ the PI square error defined by $z=(1/2)y^2+\lambda\int_0^t{y^2(s)ds}$. The main difference between the two controllers \eqref{nussbaum controller}, \eqref{nonlinear PI}  is the existence of the proportional term in the control gain of \eqref{nonlinear PI} (see also p. 166 of \cite{AKO_book}). It was hinted  in \cite{Ortega_paper},\cite{AKO_book}(no complete proof was given) that such a controller is robust to fast parasitic first-order perturbations and therefore can stabilize the Georgiou-Smith example system if $\lambda<M$. Their argument, however, was based on the fact that the related transfer function is  positive real and cannot be carried over to the case of an unstable unforced linear system or even a nonlinear system. In fact, the introduction of a simple destabilizing pole in the system 
\begin{align}\label{G-SL}
    \dot{x}&=\alpha x+bu\nonumber\\
    \dot{y}&=M(x-y)
\end{align}
($\alpha>0$) may result in instability of the closed-loop system with the controller \eqref{nonlinear PI} even if $\alpha+\lambda<M$ (see Section \ref{simulation}).

It remains therefore an open problem to design a nonlinear PI controller   robust to fast parasitic dynamics when the plant to be controlled is originally unstable and nonlinear. To this end, we consider an extension of the Georgiou-Smith  system. Particularly, we examine the overall dynamic behavior  of  the  nonlinear system  with first-order unmodelled dynamics given by
\begin{align}\label{G-SE}
    \dot{x}&=f(x)+bu\nonumber\\
    \dot{y}&=M(x-y)
\end{align}
when a nonlinear PI control law $u$ designed for the unperturbed system
\begin{equation}\label{unperturbed}
    \dot{y}=f(y)+bu
\end{equation}
is applied.
\subsection{Nonlinear PI for the unperturbed system}
For system \eqref{unperturbed}, we assume that $f(\cdot)$
is a sector-bounded nonlinearity, i.e. $f(0)=0$, $f(y)=y\alpha(y)$ and there exist some constants $\alpha_1, \alpha_2\in\mathbb{R}$ such that $\alpha_1\leq \alpha(y)\leq \alpha_2$ $\forall y\in\mathbb{R}$.
A controllability assumption is also imposed, that is $b\neq 0$.
Let now a nonlinear PI controller of the form
\begin{align}
    u&=\kappa(z)y\label{PI_o}\\
    z&=(1/2)y^2+\lambda\int_0^t{y^2(s)ds}\label{z}
\end{align}
with PI gain $\kappa(z)=\alpha_0(z)\cos(z)$ where $\alpha_0(\cdot)$ is a class $\mathcal{K}_{\infty}$ function and $\lambda>0$.
For the $z$ derivative we have for the unperturbed system \eqref{unperturbed} that
\begin{equation}\label{zdot}
    \dot{z}\leq \big[\max\{|\alpha_1|,|\alpha_2|\}+\lambda+b \kappa(z)\big]y^2.
\end{equation}
Note that  whenever $z(t)=z_k$ with $z_k:=(\pi/2)[4k+1+\textrm{sgn}(b)]$ we have
\begin{equation}\label{zdot}
    \dot{z}(t)\leq \big[\max\{|\alpha_1|,|\alpha_2|\}+\lambda-|b|\alpha_0(2k\pi)\big]y^2(t).
\end{equation}
Thus, $\dot{z}(t)\leq 0$ whenever $z(t)=z_k$ for every $k\geq k_0$
\begin{equation*}
  k_0:=\bigg\lceil\frac{1}{2\pi}\alpha_0^{-1}\bigg[\frac{1}{|b|}(\max\{|\alpha_1|,|\alpha_2|\}+\lambda)\bigg]\bigg\rceil
\end{equation*}
($\lceil x\rceil$ denotes the largest integer not exceeding $x$) which in turn implies that $z$ is bounded by $z(t)\leq z_{k'}$ where $k':=\max\{k_0,\lceil y^2(0)/4\pi\rceil\}$. The fact that $z\in\mathcal{L}_{\infty}$ implies $y\in\mathcal{L}_{\infty}\cap\mathcal{L}_{2}$ and $u\in\mathcal{L}_{\infty}$ from \eqref{z} and \eqref{PI_o} respectively. Also, from \eqref{unperturbed} we have $\dot{y}\in{\mathcal{L}_{\infty}}$. Barbalat's lemma can now be invoked to prove that $\lim_{t\rightarrow\infty}y(t)=0$. This is a standard analysis in the spirit of  \cite{AKO_book}.

Assume now the existence of parasitic first order unmodelled dynamics in the form of \eqref{G-SE}.  Sufficient conditions are given in the next section for \emph{global boundedness and attractivity} for the closed-loop system comprised from \eqref{G-SE} and the nonlinear PI controller  \eqref{PI_o} and \eqref{z}. A key property is that the nonlinear PI gain function $\kappa(z)$ should be a function of Nussbaum type. 

\section{Extended Georgiou-Smith system with sector nonlinearity}
\label{}
In this section we consider system \eqref{G-SE} with a  sector-bounded nonlinearity
\begin{align}
    f(x)&=\alpha(x)x\label{sector}\\
\alpha_1\leq \alpha&(x)\leq \alpha_2 \qquad \forall x\in\mathbb{R}\label{sector bounds}
\end{align}
for some constants $\alpha_1,\alpha_2\in\mathbb{R}$. Note that $\alpha_1,\alpha_2$ can also take positive values rendering the unforced system unstable. 
To simplify notation let us define the constant $\epsilon:=1/M$.
We have established the following theorem.
\begin{theorem}\label{main_theorem}
Let the closed-loop system described by \eqref{G-SE}, \eqref{PI_o}, \eqref{z} with sector-bounded nonlinearity given by \eqref{sector}, \eqref{sector bounds}. If
\begin{description}
  \item[(i)] $\epsilon\lambda<1$, $\epsilon(\lambda+\alpha_2)<1$
  \item[(ii)] $\alpha_2-\alpha_1\leq \frac{2\lambda}{\sqrt{1-\epsilon\lambda}}\Big[\sqrt{1-\epsilon(\lambda+\alpha_1)}+\sqrt{1-\epsilon(\lambda+\alpha_2)}\Big]$
  \item[(iii)] $\kappa(z)$ has the  Nussbaum property \eqref{nussbaum propertyp},\eqref{nussbaum propertym}
\end{description}
then, all closed-loop signals are bounded and $\lim_{t\rightarrow\infty}y(t)=\lim_{t\rightarrow\infty}x(t)=0.$
\end{theorem}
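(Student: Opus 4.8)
The plan is to reduce the theorem to a single boundedness statement — that the PI variable $z$ stays bounded — after which the conclusion follows exactly as in the unperturbed analysis sketched above: $z\in\mathcal{L}_\infty$ forces $y\in\mathcal{L}_\infty\cap\mathcal{L}_2$ through \eqref{z}, hence $u=\kappa(z)y\in\mathcal{L}_\infty$ and $\dot y\in\mathcal{L}_\infty$, so Barbalat's lemma yields $y\to0$, and boundedness of $x$ together with $x\to0$ are recovered from the $x$-subsystem. So the whole difficulty is concentrated in proving $z\in\mathcal{L}_\infty$ for the coupled system, and it is here that (i)--(iii) must all be used.

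My first step is to expose the Nussbaum integral. Differentiating $\tfrac12x^2$ along \eqref{G-SE} gives $\tfrac{d}{dt}\tfrac12x^2=\alpha(x)x^2+b\kappa(z)xy$, while \eqref{z} together with $\dot y=M(x-y)$ gives the algebraic identity $xy=\epsilon\dot z+(1-\epsilon\lambda)y^2$. Substituting the latter splits the control term into one piece proportional to $\kappa(z)\dot z$ and one proportional to $\kappa(z)y^2$:
\[
\frac{d}{dt}\Big[\tfrac12x^2-\epsilon b\,K(z)\Big]=\alpha(x)x^2+b(1-\epsilon\lambda)\kappa(z)y^2,\qquad K(z):=\int_0^z\kappa(s)\,ds.
\]
Integrating $\kappa(z)\dot z$ produces exactly $K(z(t))-K(z(0))$, the object on which the Nussbaum property \eqref{nussbaum propertyp}--\eqref{nussbaum propertym} acts; condition (i) guarantees $1-\epsilon\lambda>0$, so the sign structure of $b\kappa(z)$ is inherited unchanged from the unperturbed case.

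The core step — and the main obstacle — is to dominate the two indefinite contributions that survive: the destabilizing $\int_0^t\alpha(x)x^2\,ds$ (which may be positive when $\alpha_2>0$) and the residual $\int_0^t\kappa(z)y^2\,ds$. My intention is to show that, modulo the clean Nussbaum term $K(z)$ and terms that are bounded or absorbable, the integrated identity can be brought to the form $\tfrac12x^2(t)\le c_0+\epsilon bK(z(t))+\int_0^tQ\,ds$ where $Q$ is a quadratic form in $(x,y)$ whose coefficients are built from $1-\epsilon\lambda$, $1-\epsilon(\lambda+\alpha_1)$ and $1-\epsilon(\lambda+\alpha_2)$. The fast dynamics enter precisely here: since $x-y=\epsilon\dot y$, the quasi-steady-state relation $x\approx y$ (equivalently the $\dot z=0$ manifold $x=(1-\epsilon\lambda)y$) lets one trade $x^2$ for $y^2$ at the cost of the dissipation supplied by $M$. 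Requiring $Q\le0$ for all $(x,y)$ amounts to a discriminant (completing-the-square) inequality, and the appearance of $\sqrt{1-\epsilon(\lambda+\alpha_1)}$, $\sqrt{1-\epsilon(\lambda+\alpha_2)}$ and $\sqrt{1-\epsilon\lambda}$ in hypothesis (ii) is exactly the signature of such a condition; so I expect (ii) to be the precise requirement that $Q$ be negative semidefinite uniformly over the sector $[\alpha_1,\alpha_2]$. Verifying this sign-definiteness rigorously, and controlling the leftover $\int\kappa(z)y^2$ (e.g. by writing $\lambda y^2=\dot z-\tfrac{d}{dt}\tfrac12y^2$ so that it too feeds into $K(z)$, at the price of $\kappa'(z)$ terms that must be estimated), is the delicate part of the argument and where I expect most of the work to lie.

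With $Q\le0$ in hand the finish is standard. One is left with an inequality of the form $0\le\tfrac12x^2(t)\le c_0+\epsilon bK(z(t))$ along trajectories, and a contradiction argument shows $z$ cannot escape to $+\infty$: were $z(t)\to+\infty$, the two-sided Nussbaum property would drive $\epsilon bK(z(t))$ to $-\infty$ along a subsequence for \emph{either} sign of $b$ (using \eqref{nussbaum propertym} if $b>0$ and \eqref{nussbaum propertyp} if $b<0$), contradicting $\tfrac12x^2\ge0$. Hence $z\in\mathcal{L}_\infty$. Read as a dissipation estimate, the same quadratic-form bound also yields $x\in\mathcal{L}_2$; combined with $\dot x,\dot y\in\mathcal{L}_\infty$ this gives $y\to0$ and $x\to0$ by Barbalat. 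I anticipate essentially no difficulty in this last stage — the entire weight of the theorem rests on establishing the negative semidefiniteness of $Q$ under (i)--(ii), which is exactly where the robustness margin against the unmodelled dynamics is quantified.
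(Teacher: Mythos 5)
Your high-level architecture matches the paper's (an energy-like function whose integral exposes $K(z)=\int_0^z\kappa$, a contradiction argument via the Nussbaum property to bound $z$, then Barbalat), and your identity $xy=\epsilon\dot z+(1-\epsilon\lambda)y^2$ is exactly the paper's equation for $\dot z$. But there is a genuine gap at the step you yourself flag as the crux. Differentiating only $\tfrac12x^2-\epsilon bK(z)$ leaves the residual term $b(1-\epsilon\lambda)\kappa(z)y^2$, and this term is not controllable by the devices you propose: writing $\lambda y^2=\dot z-\tfrac{d}{dt}\tfrac12y^2$ turns $\kappa(z)\tfrac{d}{dt}\tfrac12y^2$ into $\tfrac{d}{dt}[\tfrac12 y^2\kappa(z)]-\tfrac12y^2\kappa'(z)\dot z$, and the leftover $\kappa'(z)\dot z\,y^2$ piece involves the unbounded, sign-indefinite $\kappa'$ multiplying quartic state terms --- there is no way to dominate it by the quadratic form $Q$ you are constructing. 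So the reduction to ``$0\le\tfrac12x^2\le c_0+\epsilon bK(z)$ plus a negative semidefinite $Q$'' is not actually reached.

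The missing idea is the choice of storage function. The paper takes
\[
S=\frac{\lambda}{2}x^2+\frac{1}{2}M(1-\epsilon\lambda)(x-y)^2+\epsilon c\,z-b\int_0^z\kappa(s)\,ds ,
\]
and the weights are engineered so that \emph{every} $\kappa(z)$-dependent term in $\dot S$ cancels identically: the contribution $\lambda b\kappa(z)xy$ from the $x^2$ term, $M(1-\epsilon\lambda)b\kappa(z)(x-y)y$ from the cross term, and $-b\kappa(z)\dot z=-b\kappa(z)[Mxy-M(1-\epsilon\lambda)y^2]$ sum to $b\kappa(z)xy[\lambda+M(1-\epsilon\lambda)-M]=0$. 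What survives is a pure quadratic form $\dot S=-M^2[x\ y]\Lambda(x)[x\ y]^T$ with no $\kappa$ anywhere, and positive definiteness of $\Lambda(x)$ uniformly over the sector is secured by choosing the free constant $c$; condition (ii) is precisely the discriminant condition guaranteeing a common admissible $c$ for all $\alpha\in[\alpha_1,\alpha_2]$ (your instinct about the square roots was right). Note also that the resulting integrated bound carries a term $-\epsilon c\,z(t)$ linear in $z$, which is why the full averaged Nussbaum property (iii) --- $\tfrac1{z_k}\int_0^{z_k}b\kappa\to-\infty$ --- is needed in the contradiction step, not merely $K(z_k)\to-\infty$ as in your sketch; the weaker property suffices only when $c$ can be taken nonnegative (the paper's Remark~2, i.e.\ $\alpha_2\le0$). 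The endgame (boundedness of $z$, hence of $y,u,x$, then Barbalat) is as you describe.
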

\begin{proof}
From the definition of the PI error $z$ in \eqref{z} and \eqref{G-SE} we have that
\begin{equation}\label{z_dynamics}
    \dot{z}=Mxy-M(1-\epsilon\lambda)y^2.
\end{equation}
Let now the function
\begin{align}\label{S}
    S:=\frac{\lambda}{2}x^2+\frac{1}{2}M(1-\epsilon\lambda)(x-y)^2+\epsilon c z- b\int_0^z{\kappa(s)ds}
\end{align}
with $c\in\mathbb{R}$ to be defined. Replacing from \eqref{G-SE}, \eqref{PI_o}, \eqref{z}, \eqref{z_dynamics} and canceling terms we have for its time derivative that
\begin{align}\label{dotS}
    \dot{S}=\lambda\alpha(x)x^2-M^2&(1-\epsilon\lambda)(x-y)^2\nonumber\\
    &+M(1-\epsilon\lambda)\alpha(x)x(x-y)+cxy-c(1-\epsilon\lambda)y^2.
\end{align}
Eq. \eqref{dotS} can be written in matrix notation as
\begin{align}\label{dotS1}
    \dot{S}&=-M^2\left[
                                   \begin{array}{cc}
                                     x & y \\
                                   \end{array}
                                 \right]
    \left[
    \begin{array}{cc}
    1-\epsilon(\lambda+\alpha(x)) & -\frac{1}{2}[c\epsilon^2+(1-\epsilon\lambda)(2-\epsilon\alpha(x))]\\
    * & (1-\epsilon\lambda)(c\epsilon^2+1)
    \end{array}\right]\left[
                        \begin{array}{c}
                          x \\
                          y \\
                        \end{array}
                      \right]
                      \nonumber\\
         & := -M^2\left[
                                   \begin{array}{cc}
                                     x & y \\
                                   \end{array}
                                 \right]\Lambda(x)  \left[
                        \begin{array}{c}
                          x \\
                          y \\
                        \end{array}
                      \right]
\end{align}
where $*$  denotes a symmetric w.r.t. the main diagonal element of $\Lambda(x)$. We claim that there is some constant $c\in\mathbb{R}$ such that $\Lambda(x)$ is positive definite for all $x\in\mathbb{R}$. Equivalently, we can prove that, for some $c\in\mathbb{R}$, the two principal minors of $\Lambda(x)$ given by
\begin{align*}
   \Delta_1(x)&:=1-\epsilon(\lambda+\alpha(x))\\
   \Delta_2(x)&:=[1-\epsilon(\lambda+\alpha(x))](1-\epsilon\lambda)(c\epsilon^2+1)-\frac{1}{4} [c\epsilon^2+(1-\epsilon\lambda)(2-\epsilon\alpha(x))]^2
\end{align*}
are positive $\forall x\in\mathbb{R}$. From assumption (i)  of Theorem \ref{main_theorem}, it is obvious that $\Delta_1(x)>0$ $\forall x\in\mathbb{R}$. For $\Delta_2(x)$ we have that
\begin{equation}\label{Delta_2}
    \Delta_2(x)=-(1/4)\big[A_c c^2+B_c(x) c+\Gamma_c(x)\big]
\end{equation}
with $A_c=\epsilon^4>0$, $B_c(x)=2\epsilon^3(1-\epsilon\lambda)[\alpha(x)+2\lambda]$, $\Gamma_c(x)=\epsilon^2(1-\epsilon\lambda)\alpha(x)[4\lambda+(1-\epsilon\lambda)\alpha(x)]$. $\Delta_2(x)$ is therefore a quadratic polynomial with respect to $c$ that is positive definite  if
\begin{description}
  \item[(a)] $\Delta_c(x):=B_c(x)^2-4A_c\Gamma_c(x)>0$
  \item[(b)] there exists some constant $c\in \big(c_1(\alpha(x)),c_2(\alpha(x))\big)$ for all $x\in\mathbb{R}$ where $c_1(\alpha(x))$, $c_2(\alpha(x))$ are the two roots of $\Delta_2(x)$ given by
\begin{align*}
    c_1(\alpha(x))&:=-M\Big[(1-\epsilon\lambda)(2\lambda+\alpha(x))+2\lambda\sqrt{(1-\epsilon\lambda)\big[1-\epsilon(\lambda+\alpha(x))\big]}\Big]\\
    c_2(\alpha(x))&:=-M\Big[(1-\epsilon\lambda)(2\lambda+\alpha(x))-2\lambda\sqrt{(1-\epsilon\lambda)\big[1-\epsilon(\lambda+\alpha(x))\big]}\Big].
\end{align*}
\end{description}
If we carry out the calculations we have that
\begin{equation*}
    \Delta_c(x)=16\epsilon^6\lambda^2(1-\epsilon\lambda)[1-\epsilon(\lambda+\alpha(x))]
\end{equation*}
and therefore the positivity  condition for $\Delta_c(x)$ is satisfied if $\epsilon(\lambda+\alpha_2)<1$ and $\epsilon\lambda<1$. For condition (b) to be true, as $\alpha(x)$ varies in $[\alpha_1,\alpha_2]$, there must be some $c\in\mathbb{R}$ such that $c\in[c_1(\alpha),c_2(\alpha)]$ for all $\alpha\in[\alpha_1,\alpha_2]$. This holds true if $\max_{\alpha\in[\alpha_1,\alpha_2]}c_1(\alpha)<\min_{\alpha\in[\alpha_1,\alpha_2]}c_2(\alpha)$. Function $c_2(\cdot)$ is obviously decreasing with respect to $\alpha(x)$ with minimum value $c_2(\alpha_2)> c_2((1/\epsilon)(1-\epsilon\lambda))=-(1/\epsilon^2)(1-\epsilon^2\lambda^2)$. Function $c_1(\cdot)$ on the other hand is decreasing up to some point $\alpha_0=(1/\epsilon)[1-\epsilon\lambda/(1-\epsilon\lambda)]$ and then increasing up to $(1/\epsilon)(1-\epsilon\lambda)$ with value $c_1((1/\epsilon)(1-\epsilon\lambda))=-(1/\epsilon^2)(1-\epsilon^2\lambda^2)<c_2(\alpha_2)$. Thus, the second condition holds true if $c_1(\alpha_1)<c_2(\alpha_2)$ which is exactly assumption (ii) of the theorem. Thus, selecting  $c:=\epsilon_0 c_1(\alpha_1)+(1-\epsilon_0)c_2(\alpha_2)$ for any $\epsilon_0\in(0,1)$  we have $\dot{S}\leq 0$.
Integrating now $\dot{S}\leq 0$ we have that $S(t)\leq S(0)$ or equivalently
\begin{align}\label{Sbound}
    \lambda x^2(t)+M(1-\epsilon\lambda)(x(t)-y(t))^2 <2S(0)-2\epsilon cz(t)+ 2b\int_0^{z(t)}{\kappa(s)ds}.
\end{align}
The above inequality and the Nussbaum property of $\kappa(\cdot)$ ensure the boundedness of $z$. To prove this, let us assume the contrary. From the Nussbaum property (iii) of $\kappa(z)$ there exists a strictly increasing sequence $\{z_k\}_{k=1}^{\infty}$ such that $\lim_{k\rightarrow\infty}z_k=+\infty$ and
\begin{equation}\label{nussbaum_zk}
    \lim_{k\rightarrow\infty}\frac{1}{z_k}\int_0^{z_k}{b\kappa(w)dw}=-\infty.
\end{equation}
Due to continuity of $z$, the PI square error $z$ will eventually pass from an infinite number of elements of $\{z_k\}_{k=1}^{\infty}$. From \eqref{Sbound}, we have for the times $t_k$ at which $z(t_k)=z_k$
\begin{align}\label{Sbound_tk}
   \frac{\lambda}{2} x^2(t_k)+\frac{1}{2}M(1-\epsilon\lambda)(x(t_k)-y(t_k))^2 <S(0)-\epsilon cz_k+ b\int_{0}^{z_k}{\kappa(w)dw}.
\end{align}
If we divide all terms in \eqref{Sbound_tk} with $z_k$ and take into account  the limiting property \eqref{nussbaum_zk} the left hand side (l.h.s.) of \eqref{Sbound_tk} should take negative values for all $k\geq k_0$ for some $k_0\in\mathbb{N}$. This yields the desired contradiction since the l.h.s. of \eqref{Sbound_tk} is a sum of squares which is always nonnegative. Thus, $z\in\mathcal{L}_{\infty}$ and therefore $y\in\mathcal{L}_{\infty}\cap\mathcal{L}_{2}$, $u\in\mathcal{L}_{\infty}$ and from \eqref{Sbound} $x\in\mathcal{L}_{\infty}$. Then, the system equations \eqref{G-SE} yield $\dot{x},\dot{y}\in\mathcal{L}_{\infty}$. Invoking now Barbalat lemma we obtain the desired property $\lim_{t\rightarrow\infty}x(t)=\lim_{t\rightarrow\infty}y(t)=0$.
\end{proof}
\begin{remark}
In the case of a linear system $f(x)=\alpha x$ condition (ii) of Theorem \ref{main_theorem} is no longer needed and  (i)  reduces to $\epsilon\lambda<1$ for $\alpha\leq 0$ (nominally stable system) and $\epsilon(\alpha+\lambda)<1$ for $\alpha>0$ (nominally unstable). Note that in the latter case the necessary condition for stabilization by simple output feedback (with known sign of $b$) is $\epsilon\alpha<1$.
 \end{remark}
 \begin{remark}
 If $c_2(\alpha_2)\geq 0$ then the constant $c$ in the definition of $S$ can be nonnegative. This means that in \eqref{Sbound}  $-c\epsilon z(t)\leq 0$ and the Nussbaum condition (iii) for $\kappa(z)$ in Theorem \ref{main_theorem} can be relaxed to
\begin{align}
    \limsup_{z\rightarrow\infty}\int_{0}^{z}{\kappa(s)ds}=+\infty\\
    \liminf_{z\rightarrow\infty}\int_{0}^{z}{\kappa(s)ds}=-\infty.
\end{align}
After calculations one can show that condition $c_2(\alpha_2)\geq 0$ holds true iff $\alpha_2\leq 0$.
Thus, if the unforced linear system is stable ($\alpha\leq 0$) and $\epsilon\lambda<1$ then, the controller \eqref{nonlinear PI} results in bounded and attractive closed-loop behavior. This also provides a strict proof for the integrator example of \cite{Ortega_paper}, \cite{AKO_book}.
\end{remark}
\begin{remark}\label{remark_sector}
Note that the r.h.s. of condition (ii) tends to $4\lambda$ in the limit $\epsilon\rightarrow 0$. Thus,  for some sector bounded nonlinearity \eqref{sector bounds}, if we select $\lambda>(\alpha_2-\alpha_1)/4$ then there exists some $\epsilon_0>0$ such that for all $\epsilon<\epsilon_0$ the closed-loop system \eqref{G-SE}, \eqref{PI_o}, \eqref{z} is globally bounded and attractive.
\end{remark}
\section{Simulation results}
 \label{simulation}
A simulation study was performed for the perturbed integrator  (P-INT) and the perturbed linear system  (P-LS) described by  \eqref{G-S}, \eqref{G-SL} respectively with parameters $\alpha=1$, $b=1/2$, $\lambda=2.5$, $\epsilon=1/4$ and initial conditions $x(0)=y(0)=4$. For the specific parameters, condition (i) of Theorem \ref{main_theorem} holds true. We tested the case of a Nussbaum gain based (NG) controller \eqref{nussbaum controller} and a nonlinear PI controller \eqref{nonlinear PI} with gains $\kappa(z)=z\cos(z)$  (not a Nussbaum function) denoted as nPI and $\kappa(z)=z^2\cos(z)$ (Nussbaum function) denoted as nPI-N.
\begin{figure}[!t]
\centering
\includegraphics[width=4.1in]{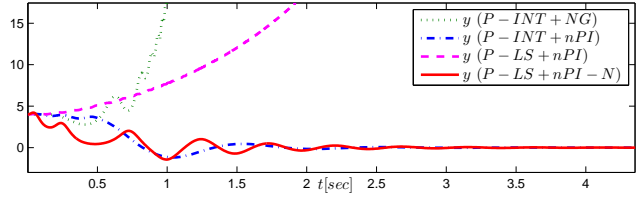}
\caption{Linear system: Time responses of  $y(t)$ for the cases of a perturbed integrator (P-INT) and a perturbed linear system (P-LS) for the three controllers NG, nPI, nPI-N.}
\label{comparisons}
\end{figure}
The output response $y$ shown in Fig. \ref{comparisons} verifies our theoretical analysis. Particularly, for the P-INT system with the NG controller, $y$ is divergent as shown in \cite{GS}. If the nPI controller is used then $y$ remains bounded and converges to zero \cite{Ortega_paper}, \cite{AKO_book}. However, the nPI control fails to regulate the P-LS system. Convergent solutions are obtained for the P-LS system only when the nPI-N is employed.

Let now the perturbed nonlinear system \eqref{G-SE} with $f(x)=3[1+\sin^2(x)]x$ where $\alpha_1=3$, $\alpha_2=6$ and $b=1$. Selecting $\lambda=2.5$, we have that both conditions (i) and (ii) are satisfied for every $\epsilon<1/(\alpha_2+\lambda)=2/17$ (see Remark \ref{remark_sector}).  For the  control law \eqref{PI_o}, \eqref{z}, $\kappa(z)=z^2\sin(z)$ simulation results are shown in Fig. \ref{sector_example} with $\epsilon=0.1$  and initial conditions $x(0)=y(0)=4$.
\begin{figure}[!t]
\centering
\includegraphics[width=4.1in]{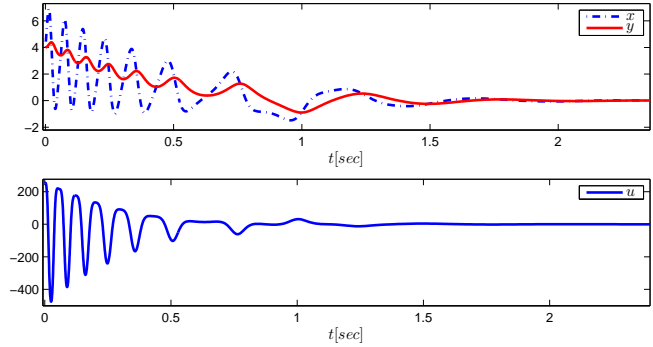}
\caption{Nonlinear system: Time responses of system states  $x,y$ and control input $u$.}
\label{sector_example}
\end{figure}
As expected, all $x,y,u$ are bounded and converge to the origin as time passes.
 \section{Conclusions}
Sufficient conditions are derived for global boundedness and attractivity of a perturbed nonlinear system with sector-bounded nonlinearity under a nonlinear PI control action.
The results further demonstrate the superiority of the nonlinear PI controls compared to simple Nussbaum gain based schemes with respect to robustness to unmodelled dynamics.



\end{document}